\newcommand{\bfa}{\mathbf{a}}
\newcommand{\hbfa}{\hat{\bfa}}
\newcommand{\bfA}{\mathbf{A}}
\newcommand{\bfb}{\mathbf{b}}
\newcommand{\bfe}{\mathbf{e}}
\newcommand{\bfF}{\mathbf{F}}
\newcommand{\bfg}{\mathbf{g}}
\newcommand{\bfH}{\mathbf{H}}
\newcommand{\bfI}{\mathbf{I}}
\newcommand{\bfJ}{\mathbf{J}}
\newcommand{\bfK}{\mathbf{K}}
\newcommand{\bfL}{\mathbf{L}}
\newcommand{\bfM}{\mathbf{M}}
\newcommand{\bfN}{\mathbf{N}}
\newcommand{\bfn}{\mathbf{n}}
\newcommand{\bfp}{\mathbf{p}}
\newcommand{\bfP}{\mathbf{P}}
\newcommand{\bfQ}{\mathbf{Q}}
\newcommand{\bfR}{\mathbf{R}}
\newcommand{\hbfR}{\hat{\bfR}}
\newcommand{\bfT}{\mathbf{T}}
\newcommand{\bfu}{\mathbf{u}}
\newcommand{\bfv}{\mathbf{v}}
\newcommand{\hbfv}{\hat{\bfv}}
\newcommand{\bfV}{\mathbf{V}}
\newcommand{\bfw}{\mathbf{w}}
\newcommand{\bfx}{\mathbf{x}}
\newcommand{\hbfx}{\hat{\mathbf{x}}}
\newcommand{\bfy}{\mathbf{y}}
\newcommand{\bfzero}{\mathbf{0}}
\newcommand{\bodelta}{\boldsymbol{\delta}}
\newcommand{\bochi}{\protect\raisebox{1pt}{$\boldsymbol{\chi}$}}
\newcommand{\hbochi}{\hat{\bochi}}
\newcommand{\boeta}{\boldsymbol{\eta}}
\newcommand{\boGamma}{\boldsymbol{\Gamma}}
\newcommand{\boomega}{\boldsymbol{\omega}}
\newcommand{\bonu}{\boldsymbol{\nu}}
\newcommand{\boPi}{\boldsymbol{\Pi}}
\newcommand{\boSigma}{\boldsymbol{\Sigma}}
\newcommand{\boUpsilon}{\boldsymbol{\Upsilon}}
\newcommand{\hboUpsilon}{\hat{\boldsymbol{\Upsilon}}}
\newcommand{\boxi}{\boldsymbol{\xi}}
\newcommand{\boXi}{\boldsymbol{\Xi}}
\newtheorem{definition}{Definition}
\newtheorem{theorem}{Theorem}
\newtheorem{prop}{Proposition}
\newcommand{\RR}{{\mathbb R}}
\DeclareMathOperator*{\argmin}{argmin}
\DeclareMathOperator*{\argmax}{argmax}
\pgfplotsset{compat=1.12}
\title{\LARGE \bf Invariant Smoothing with low process noise}
\author{Paul Chauchat$^{2}$, Silv\`ere Bonnabel$^{1}$ and Axel Barrau$^{1,3}$
	\thanks{$^{1}$Centre for Robotics, MINES Paris, PSL Research University,  60 Boulevard Saint-Michel, 75006 Paris, France; and Institut des Sciences Exactes et Appliquees, University of New Caledonia,
		{\tt\small silvere.bonnabel@mines-paristech.fr}}%
	\thanks{$^{2}$IETR, CentraleSupelec, Rennes, France
	{\tt\small paul.chauchat@centralesupelec.fr}}%
	\thanks{$^{3}$OFFROAD, 5 rue Charles de Gaulle, Alfortville, France
		{\tt\small axel@offroad.works }}%
}
\begin{document}
	\maketitle
	\thispagestyle{empty}
	\pagestyle{empty}
	
	\begin{abstract}In this paper we address smoothing -  that is, optimisation-based -  estimation techniques for localisation problems in the case where motion sensors are very accurate. Our mathematical analysis focuses on the difficult limit case where motion sensors are infinitely precise, resulting in the absence of  process noise. Then the formulation degenerates, as the dynamical model that serves as a soft constraint becomes an equality constraint, and conventional smoothing methods are not able to fully respect it. By contrast, once an appropriate Lie group embedding has been found, we prove  theoretically that   invariant smoothing   gracefully accommodates this limit case   in that the estimates tend to be  consistent with the induced  constraints when the noise tends to zero. Simulations on the important problem of initial alignement in inertial navigation show that, in a low noise setting, invariant smoothing may favorably compare  to state-of-the-art smoothers when using precise inertial measurements units (IMU).
	\end{abstract}
	
	\section{Introduction }
	Over the past years,  the smoothing approach has gained ever increasing credit as a state estimator in robotics, owing to the progresses of computers and sparse linear algebra. The rationale is to reduce the consequences of wrong linearisation points \cite{dellaert2006square} through relinarisation. Many of the state-of-the-art algorithms for simultaneous localisation and mapping (SLAM) and visual odometry  are based on smoothing, e.g.,  \cite{kaess2012iSAM2,forster2017SVO2}. It was more recently applied to GPS aided inertial navigation, showing promising results \cite{indelman2013incremental, zhao2014differential, pfeifer2019E-M}. 
	
In parallel, Lie group embeddings have allowed for a new class of filters, see  \cite{hua2014implementation,BayesianLieGroups2011,bourmaud2013discrete}, and in particular the Invariant Extended Kalman Filter (IEKF) \cite{bonnabel2009invariant}, in its modern form \cite{barrau2014invariant}, see \cite{barrau2017annual} for an overview. The IEKF possesses convergence guarantees \cite{barrau2014invariant}, resolves the   inconsistency issues of the EKF for SLAM, see \cite{barrau2015SLAM} and following work \cite{wu2017invariant, caruso2019magneto, heo2018consistent,mahony2017geometric}. For inertial navigation, combining the IEKF with    the Lie group of double spatial direct isometries $SE_2(3)$, or extended poses,  introduced in \cite{barrau2014invariant},  leads to powerful results. In particular, it has led to patented products, see  \cite{barraupatent,barrau2017annual}, and improved legged robot state estimation \cite{hartley2018legged,hartley2018legged2}. Besides their convergence properties as   observers, invariant filters also gracefully accommodate  navigation systems' uncertainty, see \cite{brossard2021associating}. 
Leveraging the framework of Invariant filtering for smoothing, a new estimation algorithm was recently proposed, namely Invariant Smoothing (IS)  \cite{chauchat2018invariant}, see also \cite{walsli2018invariant} and \cite{van2020invariant}.

Another property of the IEKF is that it delivers ``physically consistent'' estimates, when some state variables are known with high degrees of certainty, see  \cite{chauchat2017constraints, barrau2020constraints}.

 In the realm of smoothing algorihtms, low noise  (or equivalently high degrees of certainty) leads to two different kinds of problems:  
 \begin{itemize}
     \item linear matrix inversion problems due to ill-conditioning when solving the linearised problem at each step,
     \item once the linearised problem is properly solved, inconsistent estimates stemming from the nonlinearity of the original problem. 
 \end{itemize}
 The first point is solved in \cite{chauchat2021factor} and won't be considered herein. The second point is the object of the current paper.  
 
 The contributions of this paper are as follows:
 \begin{itemize}
     \item Motivated by the fact that smoothing generally performs better than filtering, we  provide  a theory that consists of the counterpart of the results of \cite{chauchat2017constraints, barrau2020constraints} in the context of smoothing.
     \item IS is shown to better behave than other solvers on a simple wheeled robot localisation example with deterministic dynamics, and the theory gives insight into the reasons why.
     \item The theory is applied to the difficult problem of alignment in inertial navigation systems (INS), i.e.,  IMU-GPS fusion when initial orientation is unknown \cite{wu2013integration, cui2017in_motion},   using a high-grade IMU. Invariant smoothing (IS)  favorably compares to    state-of-the-art  smoothing schemes \cite{forster2016preintegration, gtsam}, as predicted by the theory.
 \end{itemize}

 The superiority of invariant filtering   for alignment, discovered during A. Barrau's thesis \cite{barraupatent,barrauPhD},  has been confirmed in multiple recent works  \cite{fu2021new,cha2021effect,chang2021strapdown,chang2021inertial}, which is the reason why it had first  prompted  patent filing and industrial implementations  \cite{barraupatent}. This has opened avenues for filtering-based alignment, a task generally performed through optimization (for a recent reference  see \cite{ouyang2022optimization}).  However, the optimisation-based invariant approach to alignment has never been explored, as is done in the present paper.

	The paper is organised as follows.  In Section II we apply IS to wheeled robot localization and show in the absence of noise the behavior of IS is   more meaningful than other smoothing algorithms. To explain this feature, we start off by situating  the problem in Section III. Section IV presents the proposed general theory which explains the behavior observed in Section II. In Section V,  the alignment problem in inertial navigation is shown to fit into the proposed framework, using the  Lie group of double direct spatial isometries $SE_2(3)$ \cite{barrau2014invariant}, and the theoretical results are shown to  apply. Low noise simulations show the invariant smoothing approach favorably compares to state-of-the-art smoothers. 
	
	\section{Introductory example}	\label{sec:2d_example}

	Consider  a wheeled mobile robot  in the plane  with unknown initial heading $\theta_0$. The state consists of its  orientation and 2D position $(\theta, \bfx) \in \mathbb S^1 \times \RR^2$.  Let $\mathbf{R}(\theta)\in SO(2)$ denote  the planar rotation of angle $\theta$. For tutorial purposes, assume the robot  \emph{follows a straight line at constant velocity}. This constant velocity motion writes, see e.g., \cite{barrau2014invariant}
	\begin{equation}
	\theta_{i+1} = \theta_i, \quad  
	\bfx_{i+1} = \bfx_i + \mathbf{R}(\theta_i) \mathbf{u}\label{noisefree}
	\end{equation}where 
	$\mathbf{u}=\mathbf{u}_0 dt\in\RR^2$ with $\mathbf{u}_0$ the constant robot's velocity and $dt$ the stepsize. Suppose  that the robot is equipped with differential drives which are perfect, i.e.,   flawlessly reflect the motion is on a straight line (i.e., null angular velocity), and perfectly measure $\mathbf{u}$. Moreover, assume the initial position of the robot $\bar{\bfx}_0\in\RR^2$ is perfectly known. As the initial orientation of the robot (i.e., heading $\theta_0$) is assumed unknown,   the robot's belief about the heading is wrong, see Figure \ref{fig:traj_length2}.  
	If now we receive GPS-based observations of the form $\bfy_k=\bfx_k+\bfn_k$ at some instants $k$, where  $\bfn_k \sim \mathcal{N} (\mathbf{0}, \mathbf{N}_k)$ is a noise that models uncertainty about position measurements, then the robot may calculate the most likely state  trajectory $(\theta_0, \bfx_0), \cdots,(\theta_n, \bfx_n) $  given all  observations up to time $n$. No matter what the observations are, any sensible optimizer should reflect at each step that  the estimated trajectory is a straight line, with known length (as $\mathbf{u}$ is known), but unknown direction $\theta$. 
	
		\begin{figure}[hbt!]
	 \centering	\includegraphics[width=.5 \columnwidth]{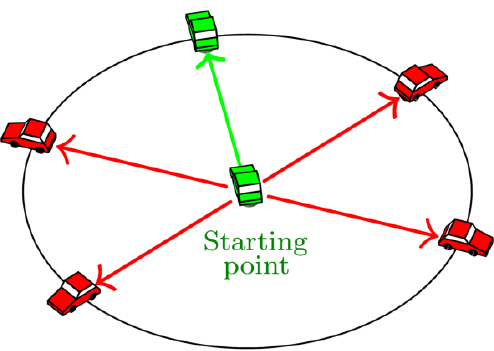}
		\caption{A wheeled robot follows a straight line from a known position with unknown heading. As perfect drives measure the relative displacement, any localization algorithm should ``reflect'' the car is on a circle centered on the initial position with known radius, and optimising over the entire state $(\theta,\bfx)$  to account for GPS  position measurements should boil  down to optimising over heading $\theta$ only.}
		\label{fig:traj_length2}
	\end{figure}
However, a simple numerical experiment where the vehicle moves along a line at a speed of $7 m/s$ with known initial position and a $-3\pi/4$ wrong initial heading (with an initial covariance matrix $diag((3\pi/4)^2,0,0)$) proves this is not the case for standard smoothing methods, see 	Figure \ref{fig:traj_length}. This is because the information about the length is not a hard constraint for the optimisation algorithm. Neither is it  for IS, but the latter's descent step based on the invariant filtering framework \cite{barrau2017annual} inherently respects this information.
	
	\begin{figure}
		\includegraphics{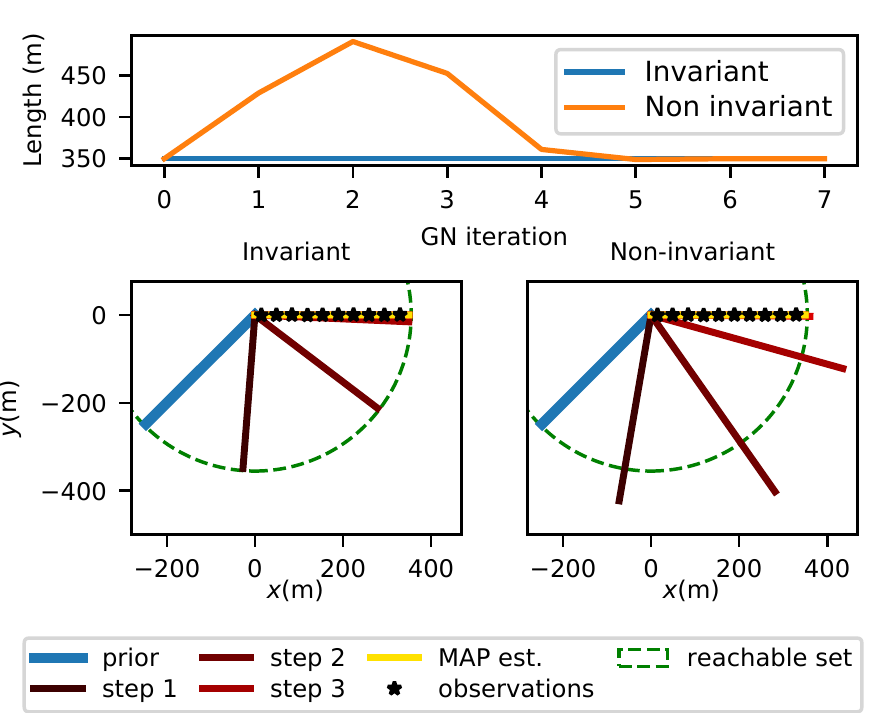}
		\caption{Conventional \cite{dellaert2006square} and Invariant Smoothing  \cite{chauchat2018invariant} of the entire trajectory. Top: Trajectory length of successive iterations for both methods. Bottom: Estimated trajectories from the odometry, at the first descent steps, and at convergence. Both methods maintain a straight line, but only IS keeps a fixed and correct length \emph{at each descent step}, being consistent with the uncertainty in the problem.}
		\label{fig:traj_length}
	\end{figure}
	
	 The remainder of this paper is devoted to the theoretical explanation of what is observed, and to the application of the results to the more challenging problem of inertial alignment. 
	
	\section{Lie group embeddings of the state space}
	\label{sec:lie_group}

	We first briefly recall the invariant filtering framework \cite{barrau2014invariant, barrau2017annual}. Owing to space limitation, we assume the reader has basic knowledge about Lie groups for robotics, and is referred to \cite{barfoot2017robotics} for a general presentation. We consider a state $\bochi \in G$, a matrix Lie group of dimension $q$. Its Lie algebra $\mathfrak{g}$ is identified with $\RR^q$. Thus we consider its exponential map to be defined as $\exp : \RR^q \rightarrow G$. As it is locally invertible, we denote its inverse by $\log$. We recall the notion of adjoint operator matrix of $\bochi \in G$, $\mathbf{Ad}_{\bochi}$, which satisfy
	\begin{equation}
	\forall \bochi \in G, \boxi \in \RR^q,\ \bochi^{-1} \exp(\boxi) \bochi =  \exp(\mathbf{Ad}_{\bochi} \boxi) 
	\label{eq:adjoint}
	\end{equation}
	Group automorphisms are bijective maps $\Phi : G \rightarrow G$ satisfying $\Phi(\bochi \boeta) = \Phi(\bochi) \Phi(\boeta)$ for $\bochi, \boeta \in G$. The Lie group Lie algebra   correspondance, see  \cite{barrau2019linear}, ensures for $\emph{any}$ automorphism $\Phi$ there is $\mathbf{M} \in \RR^{q\times q}$ so that
	\begin{equation}
	\forall (\bochi, \boxi) \in G\times \RR^q,\ \Phi(\bochi \exp(\boxi)) = \Phi(\bochi) \exp(\mathbf{M} \boxi),
	\label{eq:LGLA}
	\end{equation}see also   the \emph{log-linearity} property of \cite{barrau2014invariant}. 
The operator $\bonu \mapsto  \bochi^{-1} \bonu \bochi $ is easily checked to be a group automorphism, and in this case we see from \eqref{eq:adjoint} that   $\mathbf{M} =\mathbf{Ad}_{\bochi}$. We define random variables on Lie groups through the exponential, following  \cite{chirikjian2011stochastic2,bourmaud2013discrete,barfoot2017robotics,brossard2021associating,barrau2017annual}. The probability distribution $\bochi \sim \mathcal{N}_L(\bar{\bochi},\mathbf{P})$ for the random variable $\bochi \in G$ is defined as 
\begin{equation}
	\bochi =  \bar{\bochi} \exp \left(\boxi\right), \text{~} \boxi \sim \mathcal{N}\left(\mathbf{0}, \mathbf{P}\right), \label{eq:left_distrib}
	\end{equation}
	In the following, we   consider a discrete-time  trajectory  denoted as $(\bochi_i)_i$ of the following system
	\begin{subequations}
		\begin{align}
		\bochi_0 &\sim \mathcal{N}_L(\bar{\bochi}, \mathbf{P}_0),  &\bochi_{i+1} = f_i(\bochi_i)
		\label{eq:l-is_prior_noise}\\
		y_k &= h_k(\bochi_{I_k}) + \bfn_k 
		\label{eq:noisy_left_action} & \bfn_k \sim \mathcal{N}(0, \mathbf{N}_k)
		\end{align}
		\label{eq:l-is_system}
	\end{subequations}where $f_i$ is the dynamics function, $\mathbf{P}_0 \in \RR^{q \times q}$ the initial state error covariance, $\mathbf{N}_k \in \RR^{r \times r}$ the observation noise covariance, and $\bochi_{I_k}$ denotes a subset of the states which are involved in the measurements.   Thus \eqref{eq:l-is_system} reflects deterministic dynamics with noisy observations and uncertain initial state. Even if the framework of noise-free dynamics is unrealistic, it allows for a theory that studies how the smoother degenerates when noise tends to zero, as was already done in the context of Kalman filtering in  \cite{chauchat2017constraints, barrau2020constraints}.

	\subsection{Group-affine Dynamics}
	In the invariant framework, $f_i$ is assumed to be group affine. These dynamics were introduced in continuous time in  \cite{barrau2014invariant}, and in discrete time in \cite{barrau2019linear}. The main idea is that they extend the notion of linear dynamics (i.e. defined by affine maps)  from vector spaces to Lie groups.
	\begin{definition}Group affine dynamics are defined through
		\begin{align}
		\bochi_{i+1} = f_i(\bochi_i) = \boGamma_i \Phi(\bochi_{i})\boUpsilon_i. \label{eq:group_affine_def}
		\end{align}where $\boGamma_i,\boUpsilon_i\in G$, and $\Phi$ is an automorphism.
	\end{definition}
	Group affine dynamics include a large class of systems of engineering interest revolving around navigation and robotics, as shown in e.g. \cite{barrau2014invariant,barrau2019linear, walsli2018invariant, mahony2017geometric}. Note that, since $\bonu \mapsto  \bochi^{-1} \bonu \bochi $ is a group automorphism, it is sufficient to define $f_i(\bochi_i) = \Phi(\bochi_{i})\boUpsilon_i$. Both this  and  \eqref{eq:group_affine_def} prove equivalent, but the latter  fits  the equations of inertial navigation  better  \cite{barrau2019linear,brossard2021associating}.

	Group affine dynamics come with the \emph{log-linear property}, originally introduced and proved in \cite{barrau2014invariant} and whose discrete-time counterpart is easier once  \eqref{eq:LGLA} has been identified. 
	\begin{prop}[from \cite{barrau2019linear}, discrete-time log-linear  property]
For group affine dynamics \eqref{eq:group_affine_def}, we have \begin{align}f_i(\bochi_i \exp(\boxi))=\bochi_{i+1}\exp(\mathbf{F}_i\boxi)\label{loglin:prop}\end{align} with $\mathbf{F}_i=\bf\mathbf{Ad}_{\boUpsilon_i^{-1}}\mathbf{M}$ a linear operator, where $\mathbf{M}$ comes from \eqref{eq:LGLA}.
	\end{prop}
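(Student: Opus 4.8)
The plan is to prove the identity by a direct algebraic manipulation, pushing the perturbation $\exp(\boxi)$ through the two structural ingredients of the group-affine map in turn: first through the automorphism $\Phi$ by means of \eqref{eq:LGLA}, and then past the right translation by $\boUpsilon_i$ by means of the adjoint relation \eqref{eq:adjoint}. There is no analytic estimate or limiting argument involved; the entire content is bookkeeping of the group multiplications together with the correct adjoint convention.

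First I would expand the definition \eqref{eq:group_affine_def} at the perturbed point and apply the Lie-group/Lie-algebra correspondence \eqref{eq:LGLA}, which supplies a matrix $\mathbf{M}$ such that $\Phi(\bochi_i \exp(\boxi)) = \Phi(\bochi_i)\exp(\mathbf{M}\boxi)$:
\begin{equation*}
f_i(\bochi_i \exp(\boxi)) = \boGamma_i\, \Phi\!\left(\bochi_i \exp(\boxi)\right)\boUpsilon_i = \boGamma_i\, \Phi(\bochi_i)\exp(\mathbf{M}\boxi)\,\boUpsilon_i .
\end{equation*}
The crux is then to commute the group exponential $\exp(\mathbf{M}\boxi)$ with the right factor $\boUpsilon_i$. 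Writing $\exp(\mathbf{M}\boxi)\boUpsilon_i = \boUpsilon_i\bigl(\boUpsilon_i^{-1}\exp(\mathbf{M}\boxi)\boUpsilon_i\bigr)$ and invoking \eqref{eq:adjoint} to collapse the conjugation into a single exponential of an adjoint-transported argument gives
\begin{equation*}
\exp(\mathbf{M}\boxi)\,\boUpsilon_i = \boUpsilon_i \exp\!\bigl(\mathbf{Ad}_{\boUpsilon_i^{-1}}\mathbf{M}\,\boxi\bigr).
\end{equation*}

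Substituting this back and regrouping the leading factors yields
\begin{equation*}
f_i(\bochi_i \exp(\boxi)) = \bigl(\boGamma_i\, \Phi(\bochi_i)\,\boUpsilon_i\bigr)\exp\!\bigl(\mathbf{Ad}_{\boUpsilon_i^{-1}}\mathbf{M}\,\boxi\bigr) = f_i(\bochi_i)\exp(\mathbf{F}_i\boxi),
\end{equation*}
and since $f_i(\bochi_i)=\bochi_{i+1}$ I would read off $\mathbf{F}_i = \mathbf{Ad}_{\boUpsilon_i^{-1}}\mathbf{M}$; its linearity in $\boxi$ is immediate, being the composition of the two fixed matrices $\mathbf{Ad}_{\boUpsilon_i^{-1}}$ and $\mathbf{M}$. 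The main (indeed only) obstacle is the adjoint-convention bookkeeping in the commutation step: one must ensure the adjoint carries the perturbation in the correct direction so that the inverse appears on $\boUpsilon_i$, since a sign or inverse slip there would corrupt the final expression for $\mathbf{F}_i$. Note also that because $\bonu\mapsto\bochi^{-1}\bonu\bochi$ is itself an automorphism, the simplified form $f_i(\bochi_i)=\Phi(\bochi_i)\boUpsilon_i$ could be used instead, absorbing $\boGamma_i$ into $\Phi$; the computation is unchanged, which is a useful consistency check on the result.
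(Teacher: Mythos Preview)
Your proof is correct and essentially identical to the paper's own argument: the paper also applies \eqref{eq:LGLA} to push $\exp(\boxi)$ through $\Phi$, then invokes \eqref{eq:adjoint} to commute $\exp(\mathbf{M}\boxi)$ past $\boUpsilon_i$, and reads off $\mathbf{F}_i=\mathbf{Ad}_{\boUpsilon_i^{-1}}\mathbf{M}$. Your cautionary remark about the adjoint convention and the consistency check via the simplified form $f_i(\bochi_i)=\Phi(\bochi_i)\boUpsilon_i$ are sound additions but do not change the route.
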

	\begin{proof}Focusing on, e.g.,  the first step, we have
	\begin{align}\boGamma_0 \Phi(\bochi_0 \exp(\boxi)) \boUpsilon_0 &\stackrel{\eqref{eq:LGLA}}{=} \boGamma_0 \Phi(\bochi_0) \exp(\mathbf{M} \boxi) \boUpsilon_0 \nonumber\\ 
	&\stackrel{\eqref{eq:adjoint}}{=} \boGamma_0 \Phi(\bochi_0)\boUpsilon_0 \exp(\bf\mathbf{Ad}_{\boUpsilon_0^{-1}}\mathbf{M} \boxi) \nonumber\\
	&=\bochi_1 \exp(  \bf\mathbf{F}_0\boxi).
	\label{eq:group_affine_prop}
	\end{align}
	\end{proof}

	\subsection{Lie group embedding for the introductory example}
	\label{sec:2d_reachability}
	
  We insist that in the invariant filtering approach, Lie group embedding goes well beyond representing  a state variable (e.g., using a rotation matrix to encode the vehicle's orientation). It is more subtle, as various Lie group embeddings exist: Some bring properties and some do not. Back to the simple  introductory example, the state and   dynamics \eqref{noisefree}  should be embedded in  the Lie group of 2D poses, $SE(2)$:
	\begin{equation}
	\bochi = \begin{bmatrix}	\mathbf{R}(\theta) & \bfx \\ 0_{1 \times 2} & 1	\end{bmatrix},
	\quad
	\bochi_{i+1} = \bochi_i\boUpsilon_i , \quad\boUpsilon_i:=\begin{bmatrix}	
	
	\mathbf{Id} & \bfu \\ 0_{1\times 2} & 1	\end{bmatrix}.
	\end{equation}
	The dynamics obviously write as \eqref{eq:group_affine_def} with $ \Phi(\bochi_{i})= \bochi_{i}$.

As initially the position is known to be $\bar{\bfx}_0$ the uncertainty entirely concerns $\theta$, and the initial state necessarily lies in the subpace $\{(\alpha,\bar{\bfx}_0)|\alpha\in\RR\} $ of the state space. In $SE(2)$ this translates into the initial state being of the form $\{\bar{\bochi}_0 \exp(\alpha \boxi_\theta), \alpha \in \RR\}$, where $\boxi_\theta = (0,0,1)^T$. In the formalism of \eqref{eq:left_distrib}, this translates into a rank 1  covariance matrix  $\mathbf{P}_0$ whose range is spanned by  $\boxi_\theta $.

	\subsection{The Property of Reachability}
	The fact that the uncertainty is concentrated on a circle may be explained through the machinery of Lie groups in a more general setting as follows. Assume the initial state lies in a subspace of the state space defined by 
	\begin{align}
	\bar{\bochi}_0 \exp(\sum_{j=1}^p \alpha_j \boeta_j),\quad (\alpha_1,\cdots\alpha_p)\in\RR^p \label{eq:reachability_zero}
	\end{align}
	with $\boeta_1,\cdots,\boeta_p$   known vectors, and $p\leq q=\mathrm{dim}(G)$. The log-linear property, see \eqref{loglin:prop}, shows by induction that at timestep $i$ the state lies within a subspace of the state space of the same form
	\begin{equation}
	\bar{\bochi}_i \exp(\sum_{j=1}^p \alpha_j \widetilde{\mathbf{F}}_i \boeta_j), \qquad \widetilde{\mathbf{F}}_i = \mathbf{F}_{i-1} \ldots \mathbf{F}_0
	\label{eq:reachability_iter}
	\end{equation}
	
	\begin{definition}For an initial state  of the form \eqref{eq:reachability_zero} and noise-free  group affine dynamics \eqref{eq:group_affine_def}, the set of physically reachable states at timestep $i$ is defined as $ \{\bar{\bochi}_i  \exp\bigl(\sum_{j=1}^p \alpha_j \widetilde{\mathbf{F}}_i\boeta_j \bigr)|\alpha_1,\cdots,\alpha_p\in\RR\}$.	
		\label{def:admissible}
	\end{definition}

To embrace the framework of statistics - as smoothing algorithms aim at computing the most likely trajectory - we need to define uncertainty on the state space being consistent with the notion of reachability.

We define an initial belief on the state to be of the form \eqref{eq:left_distrib} where the initial state's covariance $\mathbf{P}_0$ is of rank $p < q$. Denoting by $\boeta_1,\cdots,\boeta_p$ vectors of the Lie algebra that support $\mathbf{P}_0$, the  initial distribution is then supported by a subspace of the form \eqref{eq:reachability_zero}, and any estimator which is consistent with the probabilistic setting should return estimates lying within the set of reachable states.

For technical reasons, see \cite{chauchat2017constraints, barrau2020constraints}, we will systematically assume  the vectors supporting the initial distribution form a Lie subalgebra:   for all $i,j$ the vector $[\boeta_i,\boeta_j]$, the Lie bracket of $\boeta_i, \boeta_j$ \cite{chirikjian2011stochastic2,barfoot2017robotics}, is a linear combination of $\boeta_1,\cdots,\boeta_p$. 

\textbf{Considered problem:} To summarise, what we would like to do is to devise a smoothing algorithm, that is such that when the initial state distribution is of the form \eqref{eq:left_distrib} where the initial state's covariance $\mathbf{P}_0$ is of rank $p < q$, and the  dynamics are noise-free and group-affine \eqref{eq:group_affine_def},   the estimates  $(\hat\bochi_i)_{1\leq i\leq n}$  all lie within the reachable subset   \eqref{eq:reachability_iter}, and this at each (descent) step of the   optimization procedure.

	\section{Main Result}
	\label{sec:l_is_constraints}
	In this section, we prove that Invariant Smoothing (IS) solves the problem above. By contrast standard smoothing algorithms do not, as shown by Figure \ref{fig:traj_length}. 
	\subsection{Smoothing on Lie groups}
	We first briefly recall the Invariant Smoothing (IS) framework introduced in \cite{chauchat2018invariant}. Departing from a system of the form   \eqref{eq:l-is_prior_noise} with observations \eqref{eq:noisy_left_action}, the goal of smoothing is to find \begin{align}(\bochi_i)_i^* = \argmax_{(\bochi_i)_{1\leq i\leq n}} \mathbb{P}((\bochi_i)_i | y_0, \dots, y_n)\label{lik:eq}\end{align}i.e., the maximum a posteriori (MAP) estimate of the trajectory. It is usually found through the Gauss-Newton algorithm. First we devise a cost function associated to Problem  \eqref{lik:eq} as the negative log likelihood
	$$
	\mathcal C =-\log\bigl( \mathbb{P}((\bochi_i)_{1\leq i\leq n} | y_0, \dots, y_n)\bigr) 
	$$
	that we seek to minimize. Given a current guess of the trajectory's states, $(\hat{\bochi}_i)_i$, the cost function $\mathcal C$  is linearised and then the resulting linear problem is solved exactly, yielding a novel estimate, and so on until convergence. Since $\bochi_i$ belongs to a Lie group, linearisation in IS is carried out as
	\begin{equation}
	\forall 1\leq i\leq n\quad \bochi_i = \hat{\bochi}_i \exp(\boxi_i).
	\label{eq:left_inv_error}
	\end{equation}
	where $(\boxi_i)_i$ are the searched parameters that minimize the linearized cost. 
	When considering an invertible prior $\bfP_0$ and noisy dynamics with covariance matrices $\bfQ_i$, IS linearises the cost $\mathcal C$ as  \cite{chauchat2018invariant}
	\begin{align}
	\tilde {\mathcal  C} =&\|\mathbf{p}_0 + \boxi_0\|_{\widetilde{\bfP}_0}^2 
	\label{eq:lie_group_lq}\\
	&+ \sum_i \|\hat{\bfa}_i - \mathbf{F}_i \boxi_i + \boxi_{i+1}\|_{\mathbf{Q}_i}^2 
	+ \sum_k \|\hat{\bfn}_k + \mathbf{H}_k \boXi\|_{\mathbf{N}_k}^2
	\nonumber
	\end{align}
	where we used the notation $\|\mathbf{Z}\|_{\mathbf{P}}^2=\mathbf{Z}^T \mathbf{P}^{-1} \mathbf{Z}$, and where $\boXi$ is the concatenation of $(\boxi_i)_i$. \eqref{eq:lie_group_lq} relies on the Baker-Campbell-Haussdorff formula \cite{barfoot2017robotics} $\log(\exp(a)\exp(b)) = BCH(a,b)$. $\widetilde{\bfP}_0 = \mathbf{J}_0^{-1} \mathbf{P}_0 \mathbf{J}_0^{-T}$, where $\mathbf{J}_0$ is the right Jacobian of the Lie group $G$ \cite{barfoot2017robotics,chirikjian2011stochastic2}, satisfying $BCH(\mathbf{p}_0, \boxi) = \mathbf{p}_0 + \mathbf{J}_0 \boxi + o(\|\boxi\|^2)$, $\mathbf{p}_0 = \log(\bar{\bochi}_0^{-1} \hat{\bochi}_0)$ with a prior $\bar{\bochi}_0$, $\hat{\bfa}_i = \log(f_i(\hbochi_i)^{-1} \hbochi_{i+1})$, $\hat{\bfn}_k = \mathbf{y}_k - h_k(\hbochi_{I_k})$, and $\mathbf{F}_i, \mathbf{H}_k$ are the (Lie group) Jacobians of $f_i$ and $h_k$ respectively. $\mathbf{H}_k$ was padded with zero blocks for the indices not contained in $I_k$.  
	The principle of smoothing algorithms is to solve the linearized problem \eqref{eq:lie_group_lq} in closed form, and to update the trajectory  substituting the optimal $\boxi_i$ in \eqref{eq:left_inv_error}. The problem is then relinearised at this new estimate until convergence. 
	
\subsection{Smoothing with no process noise and degenerate prior}
\label{sec:no_noise}
	
However, in this paper we assumed the dynamics \eqref{eq:l-is_prior_noise} to be noise-free, that is,    $\mathbf{Q}_i = 0$, and $\bfP_0$ to be rank-deficient. As a result, the standard formulation  \eqref{eq:lie_group_lq} appears ill-defined. Moreover, when process noise is low this makes the normal equations solving it ill-conditioned. Theoretically, it turns out that  a) \eqref{eq:lie_group_lq} has a well-defined solution when $\mathbf{Q}_i \to 0$ and b) it is possible to solve  \eqref{eq:lie_group_lq} while avoiding matrix inversions, see 
\cite{chauchat2021factor}. In the present paper, this is none of our concern, and we assume a solver, e.g., \cite{chauchat2021factor}, is able to flawlessly solve \eqref{eq:lie_group_lq} for arbitrarily small process noise, even in the limiting case where $\mathbf{Q}_i \to 0$ and $\mathbf{P}_0  $ is rank-deficient. Our concern is to study the consequences of this limiting case on the state updates. This provides insight in turn into the good behavior of the algorihtm in the presence of low process noise, as occurs in some  applications like inertial navigation.

	\subsection{Main Result}
Assuming \eqref{eq:lie_group_lq} may be properly solved, even in the case of no process noise and rank-deficient $\mathbf{P}_0$, we show now that the batch Invariant Smoother yields estimates which are consistent with the physics of the problem (in other words the assumed uncertainty) at each descent step. 
	\begin{theorem}
		\label{thm:smoothing_constraints}
		Consider the system described by noise-free dynamics \eqref{eq:l-is_prior_noise} assumed to be  group affine.  Let $(\hbochi_i)_i$ represent the current estimates of an Invariant Smoother \cite{chauchat2018invariant}.  Then   \emph{every} iteration of the optimization algorithms exhibits the two following properties (if initalised accordingly):
		\begin{enumerate}[label=(\roman*)]
		    \item  \textbf{Limiting equality constraints}. Equality constraints induced by noise-free dynamics are seamlessly handled by the unconstrained optimization algorithm, which is such that at all steps we have $\hbochi_{i+1} = f_i(\hbochi_i)$.
		    \item \textbf{Belief-compatible estimates}. Assume the prior about the initial state is such that $\mathbf{P}_0$ in \eqref{eq:l-is_prior_noise} is supported by a vector space $\mathbf{V}_0$   of dimension $p < q$,   spanned by, say,  $\boeta_0, \ldots, \boeta_p$, and   such that for all $i,j$, $[\boeta_i,\boeta_j] \in \mathbf{V}_0$: all iterations of the algorithm are in the reachable subspace.
		\end{enumerate}
	\end{theorem}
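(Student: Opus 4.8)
The plan is to prove both claims simultaneously by induction on the Gauss--Newton iteration index, the induction hypothesis being that the current estimate $(\hbochi_i)_i$ both satisfies the noise-free dynamics ($\hbochi_{i+1}=f_i(\hbochi_i)$) and lies in the reachable fibres of Definition \ref{def:admissible}. The two engines are the log-linear property \eqref{loglin:prop} and the way the degenerate weights shape the minimiser of \eqref{eq:lie_group_lq}: as $\mathbf{Q}_i\to 0$ the dynamics penalty $\|\hbfa_i-\mathbf{F}_i\boxi_i+\boxi_{i+1}\|^2_{\mathbf{Q}_i}$ acts as the exact linear constraint $\hbfa_i-\mathbf{F}_i\boxi_i+\boxi_{i+1}=0$ on the limiting solution, while a rank-deficient $\widetilde{\bfP}_0$ forces $\mathbf{p}_0+\boxi_0$ to lie in $\mathrm{range}(\widetilde{\bfP}_0)$ (deviations outside carry infinite cost, i.e.\ zero prior probability). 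Throughout I read the ``initialised accordingly'' clause as: the initial guess is a genuine noise-free trajectory obtained by propagating $\hbochi_0$ through the $f_i$, which gives the base case $\hbfa_i=\log(f_i(\hbochi_i)^{-1}\hbochi_{i+1})=0$.

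For (i), assume $\hbfa_i=0$ at the current iterate, so the limiting dynamics constraint reads $\boxi_{i+1}=\mathbf{F}_i\boxi_i$. Writing the update $\hbochi_i^+=\hbochi_i\exp(\boxi_i)$ and invoking the log-linear property \eqref{loglin:prop} I compute
\[ \hbochi_{i+1}^+ = \hbochi_{i+1}\exp(\boxi_{i+1}) = f_i(\hbochi_i)\exp(\mathbf{F}_i\boxi_i) = f_i\bigl(\hbochi_i\exp(\boxi_i)\bigr) = f_i(\hbochi_i^+), \]
so the updated trajectory is again noise-free and the induction closes. This short structural identity is the heart of the argument.

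For (ii), by (i) the updated trajectory is noise-free, hence fully determined by $\hbochi_0^+$; so by the reachability propagation \eqref{eq:reachability_iter} it suffices to show $\hbochi_0^+\in\bbochi_0\exp(\mathbf{V}_0)$. First I record that each $\mathbf{V}_i=\widetilde{\mathbf{F}}_i\mathbf{V}_0$ is again a subalgebra: $\mathbf{F}_i=\mathbf{Ad}_{\boUpsilon_i^{-1}}\mathbf{M}$ is a composition of Lie-algebra automorphisms (the differential $\mathbf{M}$ of $\Phi$ and an adjoint), hence sends the subalgebra $\mathbf{V}_0$ to a subalgebra. Next, since the current $\hbochi_0$ is reachable, $\mathbf{p}_0=\log(\bbochi_0^{-1}\hbochi_0)\in\mathbf{V}_0$, and because $\mathbf{J}_0=\mathbf{J}_r(\mathbf{p}_0)$ is a power series in $\mathrm{ad}_{\mathbf{p}_0}$ it preserves the subalgebra $\mathbf{V}_0$; therefore $\mathrm{range}(\widetilde{\bfP}_0)=\mathbf{J}_0^{-1}\mathrm{range}(\mathbf{P}_0)=\mathbf{J}_0^{-1}\mathbf{V}_0=\mathbf{V}_0$. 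The degenerate prior thus forces $\mathbf{p}_0+\boxi_0\in\mathbf{V}_0$, and with $\mathbf{p}_0\in\mathbf{V}_0$ I obtain $\boxi_0\in\mathbf{V}_0$. Finally, as $\mathbf{V}_0$ is a subalgebra, BCH gives $\exp(\mathbf{p}_0)\exp(\boxi_0)\in\exp(\mathbf{V}_0)$, so $\hbochi_0^+=\bbochi_0\exp(\mathbf{p}_0)\exp(\boxi_0)\in\bbochi_0\exp(\mathbf{V}_0)$, as required. The observation term in \eqref{eq:lie_group_lq} is finite for every $\boXi$, so it only selects the coordinates $\alpha_j$ inside the reachable fibre and never pushes the iterate out of it.

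The main obstacle is not these computations but justifying the two limiting reductions rigorously, namely that the minimiser of the penalised quadratic \eqref{eq:lie_group_lq} converges, as $\mathbf{Q}_i\to0$ with $\widetilde{\bfP}_0$ singular, to the minimiser of the equality-constrained problem used above. This is a penalty-method/KKT limit that I would either argue directly or inherit from the standing assumption that the dedicated solver of \cite{chauchat2021factor} returns precisely this constrained solution. The single genuinely technical step is that the right Jacobian $\mathbf{J}_0$ preserves the subalgebra $\mathbf{V}_0$; this is exactly where the hypothesis $[\boeta_i,\boeta_j]\in\mathbf{V}_0$ is used, and it is also what makes each $\mathbf{V}_i$ a subalgebra and what licenses the BCH closure in the final step.
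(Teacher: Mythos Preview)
Your proof is correct and follows essentially the same line as the paper's: both hinge on the log-linear property \eqref{loglin:prop} to propagate (i), and on the fact that $\mathbf{p}_0\in\mathbf{V}_0$ together with the subalgebra hypothesis forces $\mathbf{J}_0^{-1}$ (hence $\widetilde{\mathbf{P}}_0$) to preserve $\mathbf{V}_0$, yielding $\boxi_0^*\in\mathbf{V}_0$ for (ii). The only real difference is in how the two structural facts $\boxi_{i+1}^*=\mathbf{F}_i\boxi_i^*$ and $\boxi_0^*\in\mathbf{V}_0$ are extracted: the paper plugs $\mathbf{Q}_0=\mathbf{0}$ and $\hat{\mathbf{a}}_0=\mathbf{0}$ into the explicit closed-form solution of \cite{chauchat2021factor} (the $\mathbf{K}$--$\mathbf{L}$ formula) and reads them off from the block structure of $\mathbf{K}$ and $\mathbf{A}_0^{-1}$, whereas you reason abstractly that in the limit the degenerate weights act as the hard constraints $\boxi_{i+1}=\mathbf{F}_i\boxi_i$ and $\mathbf{p}_0+\boxi_0\in\mathrm{range}(\widetilde{\mathbf{P}}_0)$. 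Your route is cleaner conceptually and makes the role of the subalgebra hypothesis very transparent; the paper's route is more concrete and sidesteps the penalty-limit argument you flag as the remaining obstacle, since the closed form \eqref{eq:robust_linear_lq_solution} is already well defined at $\mathbf{Q}_0=\mathbf{0}$ and singular $\mathbf{P}_0$.
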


\begin{proof}
We detail the proof of the theorem for a simplified case, where only two states are considered, i.e. one propagation step.  
	Consider the estimates of a two states trajectory $(\hbochi_0, \hbochi_1)$, where $\hbochi_0$ is reachable, and satisfying  $\hbochi_1 = f_0(\hbochi_0)$. After the next IS update, they will become $(\hbochi_0 \exp(\boxi_0^*), \hbochi_1 \exp(\boxi_1^*))$ where a linear solver returns the solutions $\boxi_0^*,\boxi_1^*$ to \eqref{eq:lie_group_lq} in the considered degenerate case. We want to prove \begin{enumerate}[label=(\roman*)]
	    \item $\hbochi_1 \exp(\boxi_1^*) = f_0(\hbochi_0 \exp(\boxi_0^*))$,
	    \item  if $\hbochi_0,\hbochi_1$ lie in their respective reachable subspaces, so do  $\hbochi_0 \exp(\boxi_0^*), \hbochi_1 \exp(\boxi_1^*)$.
	\end{enumerate}To do so we start proving    IS is such that in the present case
	 \begin{align}
	   	 \boxi_1^* &= \bfF_0 \boxi_0^*\label{prop:fin:eq}\\ \boxi_0^* &\in \bfV_0\label{prop:fin2}
	 \end{align}
	\eqref{prop:fin:eq} implies (i) from the log-linear property \eqref{loglin:prop}. As $\hbochi_0$ is in the reachable subspace, and as $\bfV_0$   forms a Lie subalgebra (hence the technical assumption of stability by Lie bracket), we see \eqref{prop:fin2} implies (ii) as concerns $\boxi_0^*$ and the similar property regarding $\boxi_1^*$ will immediately stem from \eqref{prop:fin:eq}.  As the remainder of the proof is more technical and less insightful, and requires results from \cite{chauchat2021factor}, it has been moved to the appendix.

\end{proof}

 Note, first, that this theorem holds for any solver capable of handling $\bfQ_i = \bfzero$ and rank-deficient $\bfP_0$. Moreover, it is stronger than just saying all states are individually reachable. Here, they all share the same $(\alpha_j)_{1\leq j \leq p}$ from \eqref{eq:reachability_iter}. Finally, it underlies  the results observed in Figure \ref{fig:traj_length}: The fact each iteration appears to be a possible trajectory of the noise-free dynamics  \eqref{noisefree} stems from (i), that is, each trajectory intermediate estimate is a straight line with correct length, by contrast to the the standard smoother that distorts the trajectory at each optimization step. 
 The fact all estimates belong to circles that are compatible with the initial belief encoded in the covariance matrix $diag((3\pi/4)^2,0,0)$ stems from (ii).


	\section{Application to INS Alignment}
	\label{sec:inertial}
	
 In ``genuine'' Inertial Navigation Systems (INS), an initialisation process that relates the body frame to the world frame is required, and this process is called alignment,  see  e.g., \cite{wu2013integration, cui2017in_motion,fu2021new,chang2021strapdown}. This is a challenging process that takes time as the orientation of the carrier is difficult to estimate (the vertical is rapidly found as it is sensed by the accelerometers, but the geographic North is much more difficult to observe). As a result, the main uncertainty during the whole process is dispersed almost exclusively around the vertical axis but it may be very large since the use of magnetometers is generally banned (they are too imprecise and too sensitive to metallic and electromagnetic materials around). Of course alignment is afforded only by highly precise gyrometers, which  justifies the use of a very low noise. 
 
 We consider herein   low-noise and unbiased inertial sensors, to illustrate the practical implications of the noise-free results. We also show the advantage of IS over    state-of-the-art smoothing methods for inertial navigation \cite{forster2016preintegration, gtsam}.
	
	\subsection{Lie Group Embedding}
	
	Important discoveries of \cite{barrau2014invariant} are  the group-affine property and the introduction of $SE_2(3)$ as a Lie group embedding which makes navigation equations group-affine.
	\subsubsection{Unbiased inertial navigation is group affine}
     Consider a robot equipped with an IMU. For unbiased navigation, the state consists of the attitude be $\bfR$, velocity $\bfv$ and position $\bfx$. Unbiased inertial navigation's dynamics are given by 
	\begin{gather}
	f_{\boomega, \bfa}\begin{pmatrix} \bfR \\ \bfv \\ \bfx \end{pmatrix} = 
	\left\{ \begin{matrix}
	\bfR \exp_{SO(3)}(dt (\boomega + \bfw_g)) \\
	\bfv + dt (\bfR (\bfa + \bfw_a) + \bfg) \\
	\bfx + dt\ \bfv
	\end{matrix}
	\right.
	\label{eq:inertial_prop}
	\end{gather}
	with $\boomega,\bfa\in\mathbb{R}^3$  the gyrometers and accelerometers signals  respectively,  $\bfw_g, \bfw_a$ the associated white noises, and $\bfg$ be the gravity vector.

	Following \cite{barrau2014invariant}, the set of navigation triplets $(\bfR, \bfv, \bfx)$ can be endowed with a matrix Lie group structure, called $SE_2(3)$, and referred to as the group of double direct spatial isometries \cite{barrau2014invariant} or extended poses \cite{brossard2021associating}, through
	\begin{align}
	SE_2(3):=\left\{\bfT = \begin{bmatrix}\begin{array}{ccc}\bfR &\bfv&\bfx\\\bfzero_{3\times2} & \multicolumn{2}{c}{\bfI_2} \end{array} \end{bmatrix} \in \RR^{5\times5}\Bigg| \begin{matrix}\bfR \in SO(3)\\\bfv \in \RR^3\\ \bfp \in \RR^3\end{matrix}\right\}.\nonumber
	\end{align}
	In this setting, \eqref{eq:inertial_prop}, defines group affine dynamics (see \cite{brossard2021associating})
	\begin{align}
	\boGamma_i = &\begin{bmatrix}\begin{array}{ccc}\mathbf{Id} & dt \bfg & \bfzero\\\bfzero_{3\times2} & \multicolumn{2}{c}{\bfI_2} \end{array} \end{bmatrix},~
	\Phi(\bfT) = \begin{bmatrix}\begin{array}{ccc}\bfR &\bfv&\bfx + dt \bfv\\\bfzero_{3\times2} & \multicolumn{2}{c}{\bfI_2} \end{array} \end{bmatrix} \nonumber \\
	\boUpsilon_i &= \begin{bmatrix}\begin{array}{ccc}\exp_{SO(3)}(dt \boomega) &dt \bfa_i & \bfzero\\\bfzero_{3\times2} & \multicolumn{2}{c}{\bfI_2} \end{array} \end{bmatrix}
	\end{align}
	
	Let us illustrate how the propagation factors of IS are obtained. Let the residual be $\log(f_i(\bochi_i)^{-1} \bochi_{i+1}) = \log(\mathbf{\Delta_{IMU})}$. The Jacobian is computed with \eqref{eq:group_affine_def} and \eqref{eq:left_inv_error}:
	\begin{align}
	\mathbf{\Delta}_{IMU} & = \boUpsilon_i^{-1} \Phi(\exp(-\boxi_i) \hbochi_i^{-1}) \boGamma_i^{-1} \hbochi_{i+1} \exp(\boxi_{i+1})
	\label{eq:imu_residual}\\
	&= \exp(-\bfF_i \boxi_i) f_i(\hbochi_i)^{-1} \hbochi_{i+1} \exp(\boxi_{i+1}) \\
	\log(\mathbf{\Delta}&_{IMU}) \approx -\bfF_i \boxi_i + \boxi_{i+1} + \log(f_i(\hbochi_i)^{-1} \hbochi_{i+1}),
	\label{eq:inv_imu_jac}
	\end{align}
	where $\bfF_i = \mathbf{Ad}_{\boUpsilon_i^{-1}} \bfM$, which are given on $SE_2(3)$ by
	\begin{equation*}
	{\scriptstyle
		\mathbf{Ad_\bfT} = \begin{bmatrix}
		\bfR & \bfzero_{3\times3} & \bfzero_{3\times3} \\
		\bfv_\times \bfR& \bfR & \bfzero_{3\times3} \\
		\bfp_\times \bfR&  \bfzero_{3\times3} & \bfR
		\end{bmatrix}
		\quad 
		\bfM = \begin{bmatrix}
		\bfI_3 & \bfzero_{3\times 3} & \bfzero_{3\times 3} \\
		\bfzero_{3\times 3} &\bfI_3 & \bfzero_{3\times 3} \\
		\bfzero_{3\times 3} & dt \bfI_3 & \bfI_3
		\end{bmatrix}.
	}
	\end{equation*}

	\subsubsection{Uncertainty propagation} 
	On $SE_2(3)$, the true IMU measurement $\tilde{\boUpsilon}$ can be related to the noisy ones $\boUpsilon$ through $\tilde{\boUpsilon} = \boUpsilon \exp(\bfw)$, where $\bfw$ is a white noise on $\RR^9$ and $\exp$ denotes the exponential map of $SE_2(3)$.   For more on $SE_2(3)$, and its use for inertial navigation (notably the derivation of the covariance process noise matrix) the reader is referred to  \cite{barrau2014invariant, brossard2021associating}. 
 
	\subsection{Difference between IS and other Smoothers}
	Let us compare IS with the state-of-the-art smoothing methods  for inertial data \cite{forster2016preintegration},   and the one implemented in GTSAM \cite{gtsam} (which slightly differs). The considered residuals are essentially the same, and so are their covariances, although obtained through less tedious computations.
	The main difference lies in the parametrisation of the state (i.e. the retraction) used to update the state variables at each optimization descent step. 
	Indeed, the retractions used in \cite{forster2016preintegration} and GTSAM \cite{gtsam} are respectively
	\begin{align}
	(\hbfR, \hbfv, \hbfx) &\leftarrow (\hbfR \bodelta_R, \hbfv + \bodelta_v, \hbfx + \hbfR \bodelta_x),
	\label{eq:forster_retraction}\\
	(\hbfR, \hbfv, \hbfx) &\leftarrow (\hbfR \bodelta_R, \hbfv+\hbfR \bodelta_v, \hbfx + \hbfR \bodelta_x).
	\label{eq:gtsam_retraction}
	\end{align}
	which are linear by nature whereas the exponential map offers a fully nonlinear appropriate map. Note that \eqref{eq:gtsam_retraction} is a first-order approximation of the Lie exponential on $SE_2(3)$.  Jacobians for \eqref{eq:gtsam_retraction} can be retrived from \eqref{eq:imu_residual}, as
	\begin{align}
	\mathbf{\Delta}_{IMU} 
	&= f_i(\hbochi_i)^{-1} \hbochi_{i+1} \exp(-\hat{\bfF}_i \boxi_i)  \exp(\boxi_{i+1}),
	\label{eq:for_imu_jac}
	\end{align}
	where $\hat{\bfF}_i = \mathbf{Ad}_{\hboUpsilon_i^{-1}} \bfM$ is the wanted Jacobian, with $\hboUpsilon_i = \Phi(\hbochi_i)^{-1} \boGamma_i^{-1} \hbochi_{i+1}$ representing the ``estimated'' measurement. Jacobian for \eqref{eq:forster_retraction} can then be easily derived. The other difference is that IS uses the logartihm map of $SE_2(3)$.

	\begin{figure}[h]
	\centering
	\includegraphics[width=.95\linewidth]{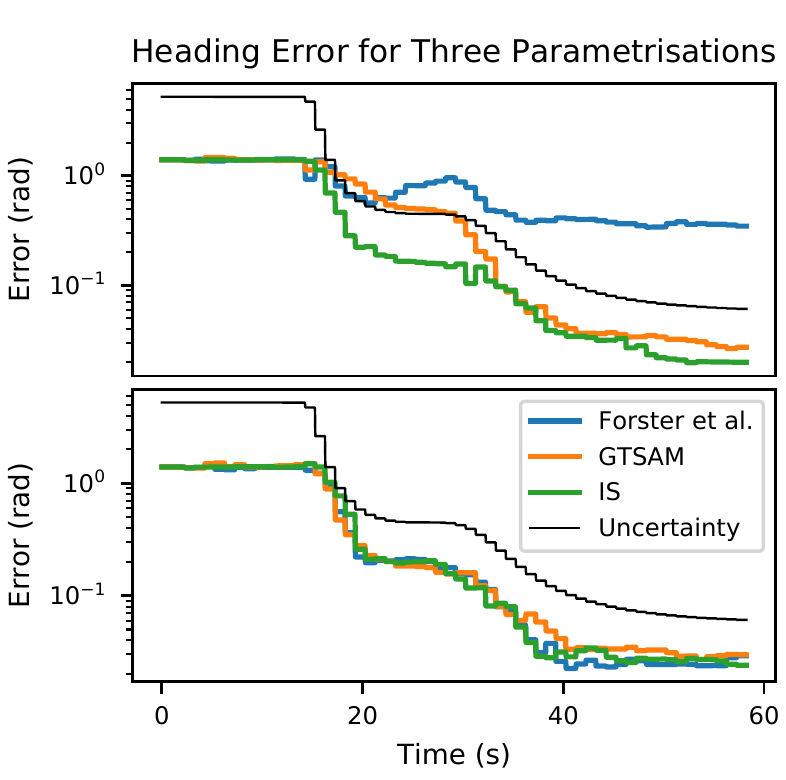}
	\caption{Yaw error  (on a log scale) for the alignment problem of Section~\ref{sec:inertial}, starting with initial heading error of 80$^\circ$ over time. IS is compared with \cite{forster2016preintegration} and GTSAM \cite{gtsam}. ``Uncertainty'' denotes the $3\sigma$ envelope of the IS estimate. Top: sliding window of size 10. Bottom: sliding window of size 50. }
	\label{fig:yaw_error}
	\end{figure}

	\subsection{Experimental Setting}
	\label{sec:experiments}
	We compare the three smoothing methods on a simulated in-motion alignment problem. A vehicle is equipped with a precise IMU and a GPS sensor. The IMU and GPS measurements are acquired at 200 Hz and 1 Hz respectively, and considered with the following standard deviations
	\begin{equation}
	\sigma_g = 2.7e-4~ ^\circ/s , ~ \sigma_a = 1.5e-3 m/s^2, ~\sigma_n = 3 m
	\label{eq:IMU_prop_sigma}
	\end{equation}
	The initial position is supposed to be   known, as is customary for initial alignment, but with unknown speed and attitude:
	\begin{equation}
		\sigma_p^0 = 0 m, \sigma_v^0 = 10 m/s, \sigma_{R}^0 = 100~ ^\circ
	\label{eq:init_sigma}
	\end{equation}
	The trajectory starts with the vehicle standing still for 15s, before starting to move forward for 25s. The estimate is initialised with zero velocity, correct roll and pitch, and an incorrect heading of 80$^\circ$, as it may be assumed that roll and pitch are rapidly identified, as they are highly observable.
	The IMU is preintegrated between each GPS measurements, see \cite{forster2016preintegration, barrau2019preintegration}, where updates occur. The estimation was carried out in a sliding window setting, where the oldest state is marginalised out once the maximum of states is reached. Two experiments were carried out, with windows of size 10 and 50, so that in the first one marginalisation starts before the yaw has converged. One Gauss-Newton iteration is carried out at each update.

	\subsection{Results}

	The results are displayed on Figure \ref{fig:yaw_error}. Although the whole navigation state is estimated, only the yaw error is reported, as it is the key parameter which is difficult to estimate. The RMSE is computed over 10 Monte Carlo runs. The $3\sigma$ bound of the yaw estimate of IS is also reported (the bounds of both other methods are very similar). In the top chart, which involves a sliding window of 10 time steps, we see that  \cite{forster2016preintegration} (Forster et al.)  becomes inconsistent due to early marginalisation. As concerns the two other algorithms, IS and GTSAM \cite{gtsam} both coincide after convergence indeed, but IS shows quicker convergence and better consistency since GTSAM exceeds the $3\sigma$ bound between 20 and 30 seconds. This is due to the fact that GTSAM uses $\hat{\boUpsilon}_i$, which becomes erroneous after update \eqref{eq:gtsam_retraction}, even with $\bfQ_i = \bfzero$. In the case of a sliding window of size 50 (bottom chart), \cite{forster2016preintegration} converges to the IS and GTSAM estimates, since they share the same cost function. Indeed, the vehicle starts moving before marginalisation occurs, so less errors are propagated, ensuring better estimators' consistency.

	\section{Conclusion}
	This paper first presented a new theoretical property of the recently introduced Invariant Smoothing (IS) framework, which was shown to respect a class of geometrical constraints appearing in the limit-case of noise-free dynamics, advocating for its use in high-accuracy navigation. This was illustrated by a 2D introductory wheeled robot localisation simulated problem, for which only IS managed to produce consistent successive iterations. The impact of this result for unbiased inertial navigation, with low but non-null process noise, was then evaluated on alignment simulations using a high-grade IMU. In this case, IS proved more stable and consistent than state-of-the-art inertial smoothing methods. Future work will further study the impact of the window size on smoothing methods, and how this adapts to biased inertial navigation, using the recently introduced two-frames group  \cite{barrau2022geometry} providing a novel embedding that better accommodates sensor biases.

	\appendix

We now complete the proof of the theorem. We first recall results of  \cite{chauchat2021factor}. As concerns \eqref{eq:lie_group_lq}, it may be re-written as   
	\begin{align}
	\begin{bmatrix}
	\boxi_0^* \\ \boxi_1^*
	\end{bmatrix} = \argmin_{\boxi_0, \boxi_1} 
	\left\|\bfA_0
	\begin{bmatrix}
	\boxi_0 \\ \boxi_1
	\end{bmatrix} - \bfb_0 \right\|_{\boPi_0}^2 
	+ \left\|\bfH_1
	\begin{bmatrix}
	\boxi_0 \\ \boxi_1
	\end{bmatrix} - \hat{\bfn}_1 \right\|_{\bfN_1}^2,
	\label{eq:partition}
	\end{align}
	where $\bfA_0 = \begin{bmatrix}
	\mathbf{Id} \\ -\bfF_0 & \mathbf{Id}
	\end{bmatrix}$, $\bfb_0 = \begin{bmatrix} \bfp_0 \\ \hbfa_0 \end{bmatrix}$, $\boPi_0 = \text{diag}(\widetilde{\bfP}_0, \bfQ_0)$ defined in \eqref{eq:lie_group_lq}. This yields  a solution to \eqref{eq:partition} as (see \cite{chauchat2021factor})
	\begin{subequations}
		\begin{align}
		\begin{bmatrix}
		\boxi_0^* \\ \boxi_1^*
		\end{bmatrix} &= \mathbf{A}_0^{-1}
		\left( \left( \mathbf{Id} - \mathbf{K}\mathbf{L} \right) \bfb_0
		+ \mathbf{K} \hat{\bfn}_1 \right)\label{eq:linear_lq_solution2}
		\\
		\mathbf{L} = \bfH_1 \bfA_0^{-1} &\qquad \mathbf{K} = \boPi_0 \mathbf{L}^T (\mathbf{L} \boPi_0 \mathbf{L}^T + \bfN_1)^{-1}	\label{eq:linear_lq_solution}
		\end{align}
		\label{eq:robust_linear_lq_solution}
	\end{subequations}
	By assumption, $\bfQ_0 = \bfzero$ and $\hbfa_0 = \log(f_0(\hbochi_0)^{-1} \hbochi_1) = \bfzero$. Note that \eqref{eq:robust_linear_lq_solution} provides a solver accomodating $\bfQ_0 = \bfzero$ and rank-deficient $\bfP_0$, as mentioned in Section~\ref{sec:no_noise}. Let $\bfL = \begin{bmatrix} \bfL_0 & \bfL_1 \end{bmatrix}$, we then have
  
	$$ \bfK = \begin{bmatrix} \widetilde{\mathbf{P}}_0 \\ & \bfzero \end{bmatrix} \begin{bmatrix} \bfL_0^T \\ \bfL_1^T \end{bmatrix} \underbrace{(\mathbf{L} \boPi_0 \mathbf{L}^T + \bfN_1)^{-1}}_{\boSigma} = \begin{bmatrix}  \widetilde{\bfP}_0 \bfL_0^T \boSigma \\ \bfzero \end{bmatrix} $$
	\begin{equation}
	    \begin{bmatrix}
	    \boxi_0^* \\ \boxi_1^*
	    \end{bmatrix} = \begin{bmatrix} \mathbf{Id} \\ \bfF_0 & \mathbf{Id} \end{bmatrix} \begin{bmatrix} \bfp_0 + \widetilde{\bfP}_0 \bfL_0^T  \boSigma (\hat{\bfn}_1 - \bfL_0 \bfp_0) \\ \bfzero \end{bmatrix}
	\end{equation}
	Recall that $\widetilde{\bfP}_0 = \mathbf{J}_0^{-1} \mathbf{P}_0 \mathbf{J}_0^{-T}$ from \eqref{eq:lie_group_lq}. By assumption, $\bfp_0 = \log(\bar{\bochi}_0^{-1} \hbochi_0) \in \bfV_0$ and $\bfV_0$ is a Lie subalgebra, so for any $\bfe \in \bfV_0$, $\bfJ_0^{-1} \bfe \in \bfV_0$ \cite{barfoot2017robotics}. Since $\bfP_0$ is spanned by $\boeta_1, \ldots, \boeta_p \in \bfV_0$, then it has its image in $\bfV_0$, and so does $\widetilde{\bfP}_0$. $\bfV_0$ being closed by addition, this shows that $\boxi_0^* \in \bfV_0$. Moreover, it is straightforward that $\boxi_1^* = \bfF_0 \boxi_0^*$. For longer trajectories, the proof easily generalises, as the involved matrices keep the same structure: $\bfK$ and $\bfb_0$ only have a non-zero first block row, and the first column of $\bfA_0^{-1}$ contains the $\widetilde{\mathbf{F}}_i$ from \eqref{eq:reachability_iter}.

	\bibliographystyle{plain}
	\bibliography{biblio}             

\end{document}